\newtheorem{theorem}{Theorem}
\newtheorem{definition}{Definition}
\theoremstyle{definition}
\newtheorem{remark}{Remark}
\theoremstyle{definition}
\newtheorem{example}{Example}
\renewcommand*\env@matrix[1][\arraystretch]{%
  \edef\arraystretch{#1}%
  \hskip -\arraycolsep
  \let\@ifnextchar\new@ifnextchar
  \array{*\c@MaxMatrixCols c}}
\title{Lossy Transmission of Correlated Sources over Two-Way Channels}
\renewcommand{\arraystretch}{1.5}
\newcommand{\markov}{\mathrel{\multimap}\joinrel\mathrel{-}%
\joinrel\mathrel{\mkern-6mu}\joinrel\mathrel{-}}
\newcommand{\mli}[1]{\mathit{#1}}
\begin{document}

\title{Joint Source-Channel Coding for the Transmission of Correlated Sources over Two-Way Channels}

\author{
\IEEEauthorblockN{Jian-Jia Weng, Fady Alajaji, and Tam\'as Linder}
\thanks{%
The authors are with the Department of Mathematics and Statistics, Queen's University, Kingston, ON K7L 3N6, Canada (Emails: jian-jia.weng@queensu.ca, \{fady, linder\}@mast.queensu.ca).}
\thanks{
This work was supported in part by NSERC of Canada.}
}

\maketitle
\begin{abstract}
A joint source-channel coding (JSCC) scheme based on hybrid digital/analog coding is proposed for the transmission of correlated sources over discrete-memoryless two-way channels (DM-TWCs). 
The scheme utilizes the correlation between the sources in generating channel inputs, thus enabling the users to coordinate their transmission to combat channel noise. 
The hybrid scheme also subsumes prior coding methods such as rate-one separate source-channel coding and uncoded schemes for two-way lossy transmission, as well as the correlation-preserving coding scheme for (almost) lossless transmission. 
Moreover, we derive a distortion outer bound for the source-channel system using a genie-aided argument. 
A complete JSSC theorem for a class of correlated sources and DM-TWCs whose capacity region cannot be enlarged via interactive adaptive coding is also established. 
Examples that illustrate the theorem are given. 
\end{abstract}

\begin{IEEEkeywords}
Network information theory, two-way channels, lossy transmission, joint source-channel coding, hybrid coding.
\end{IEEEkeywords}

\section{Introduction}\label{sec:introduction}
The transmission of correlated sources over discrete-memoryless two-way channels (DM-TWCs) was first considered by Shannon in \cite[Sec. 14]{shannon1961}. 
Shannon exhibited an uncoded scheme which preserves source correlation and attains the mid-point of his capacity outer bound curve for binary multiplying channels.  
He observed that error-free transmission is feasible if the source and channel statistics are perfectly matched. 
Along with determining channel capacity, the problem of how to efficiently transmit correlated sources over DM-TWCs is still largely unsolved.  
In particular, the perfect matching of source-channel statistics is not feasible except in some very special cases.
Here, we investigate the transmission performance of the two-way source-channel system as shown in Fig.~\ref{fig:TWCblcok}, where two users exchange correlated information under fidelity constraints. 

In \cite{gunduz2009}, the authors adopted the correlation-preserving coding scheme of \cite{cover1980} for (almost) lossless transmission of correlated sources. 
Similar to Shannon's idea, the coding scheme in \cite{gunduz2009} aims to preserve source correlation in the channel inputs to facilitate two-way transmission. 
Nevertheless, this scheme does not apply to the lossy setup. 
In \cite{jjw2017}, a separate lossy source-channel coding (SSCC) scheme that decouples data compression and error correction was constructed. 
Here, the two users employ Wyner-Ziv (WZ) lossy coding \cite{wyner1976} in tandem with standard (non-adaptive) channel coding.   
Different from the coding scheme in \cite{gunduz2009}, the separated coding structure disables the use of source correlation in generating channel inputs. 
Also, the SSCC scheme is generally sub-optimal for transmitting correlated sources. It is thus of interest to develop other more efficient coding methods.  


\begin{figure}[!t]
\centering
\includegraphics[draft=false, scale=0.53]{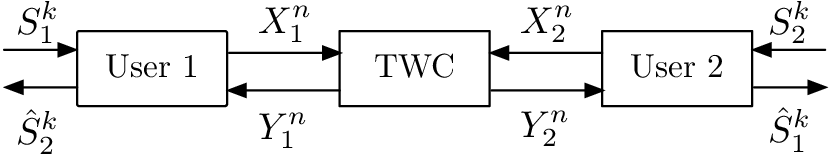}
\caption{The block diagram for the lossy transmission of correlated source $(S_1^k, S_2^k)$ via $n$ uses of a DM-TWC.}
\label{fig:TWCblcok}
\end{figure}

In this paper, we first establish a joint (lossy) source-channel achievability theorem via a hybrid coding scheme. 
Based on the viewpoint that a TWC consists of two state-dependent one-way channels, we extend the hybrid digital/analog coding scheme of \cite{kim2015} to TWCs so that the new scheme takes into account both channel state and receiver side-information.   
The proposed scheme and the achievability result recover prior results in \cite{gunduz2009} and \cite{jjw2017} as special cases.
{The proposed coding scheme is also illustrated via an example.}  
Moreover, we present a genie-aided distortion outer bound. 
Based on this outer bound and the SSCC scheme given in \cite{jjw2017}, we derive a joint source-channel coding theorem for a class of sources and DM-TWCs. 
We note that this paper concerns two-way \emph{simultaneous} transmission. 
Results for interactive two-way communication can be found in \cite{kaspi1985, maor2006}. 

The rest of this paper is organized as follows. 
In Section~II, the system model and definitions are introduced.  
The proposed joint source-channel hybrid coding scheme and special cases are presented in Section~III. 
The genie-aided outer bound and a complete joint source-channel coding theorem are given in Section~IV and conclusions are drawn in Section~V. 

\section{Preliminaries}\label{sec:p2p}
As shown in Fig.\ref{fig:TWCblcok},  two users simultaneously exchange correlated sources over a noisy DM-TWC subject to fidelity constraints.
For $j=1, 2$ and some $k\ge 1$, let $S_j^{k}\triangleq(S_{j, 1}, S_{j, 2}, \dots, \allowbreak S_{j, k})$ denote user-$j$'s source sequence with $S_{j, i}\in\mathcal{S}_j$, where $\mathcal{S}_j$ is the source alphabet. 
Also, let $\hat{S}_j^{k}\triangleq(\hat{S}_{j, 1}, \hat{S}_{j, 2}, \dots, \allowbreak \hat{S}_{j, k})$ denote the reconstruction of $S_j^{k}$ with $\hat{S}_{j, i}\in\hat{\mathcal{S}}_j$, where $\mathcal{\hat{S}}_j$ is the reconstruction alphabet. 
The source pair $\{(S_{1, m}, S_{2, m})\}_{m=1}^k$ is assumed to be memoryless in time having joint probability distribution $P_{S_1, S_2}$ 
at each time instant, i.e., $P_{S_1^k, S_2^k}(s_1^k, s_2^k)=\prod_{m=1}^k P_{S_1, S_2}(s_{1, m}, s_{2, m})$. 
The reconstruction quality for user~$j$ is assessed via the distortion measure $d_j: \mathcal{S}_j\times\mathcal{\hat{S}}_j\rightarrow\mathbb{R}_{\ge 0}$.
The distortion between sequences $s_j^k$ and $\hat{s}_j^k$ is defined as $d_j(s_j^k, \hat{s}_j^k)\triangleq k^{-1}\sum_{m=1}^k d_j({s_{j, m}, \hat{s}_{j, m}})$ for $j=1, 2$.

Let $\mathcal{X}_j$ and $\mathcal{Y}_j$ denote the channel input and output alphabets for user $j$ and let $X_j^{n}=(X_{j, 1}, X_{j, 2}, \dots, \allowbreak X_{j, n})$ and $Y_{j}^n=(Y_{j, 1}, Y_{j, 2}, \dots, \allowbreak Y_{j, n})$ denote the channel input and received sequences, respectively. 
Here, except when explicitly stated otherwise, 
we assume that all alphabets are finite.
We next define a joint source-channel code (JSCC) for transmitting $S_1^k$ and $S_2^k$ via $n$ channel uses of a DM-TWC with transition distribution $P_{Y_1,Y_2|X_1,X_2}$. 

\begin{definition}
An $(n, k)$ JSCC for the lossy transmission of $(S_1^k, S_2^k)$ over a DM-TWC consists of two sequences of encoding functions $f_1\triangleq \{f_{1, i}\}_{i=1}^n$ and $f_2\triangleq \{f_{2, i}\}_{i=1}^n$ such that
\[
\begin{array}{ll}
X_{1,1}= f_{1, 1}(S_1^k),& X_{1,i}=f_{1, i}(S_1^k, Y_1^{i-1})\\
X_{2,1}=f_{2, 1}(S_2^k),& X_{2,i}=f_{2, i}(S_2^k, Y_2^{i-1})
\end{array}
\]
for $i=2, 3, \dots, n$, and two decoding functions $g_1$ and $g_2$ such that $\hat{S}_2^k=g_1(S_1^k, Y_1^n)$ and $\hat{S}_1^k=g_2(S_2^k, Y_2^n)$.
\end{definition}

Note that with this code definition, the joint distribution of all the system variables is given by 
\begin{IEEEeqnarray}{l}
P_{S_1^k, S_2^k, X_1^n, X_2^n, Y_1^n, Y_2^n}=P_{S_1^k, S_2^k}\cdot \left(\prod\limits_{i=1}^n P_{X_{1, i}|S_1^k, Y_1^{i-1}}\right)\cdot\nonumber\\
\ \ \qquad\qquad\left(\prod\limits_{i=1}^n P_{X_{2, i}|S_2^k, Y_2^{i-1}}\right)\cdot\left(\prod\limits_{i=1}^n P_{Y_{1,i}, Y_{2, i}|X_{1, i}, X_{2, i}}\right),\nonumber
\end{IEEEeqnarray}
where $P_{Y_{1,i}Y_{2,i}|X_{1,i},X_{2,i}}=P_{Y_1,Y_2|X_1,X_2}$ for $i=1,\ldots,n$.
Also, the rate of the code is $k/n$ (in source symbol/channel use), and the expected distortion associated with the code is $D_j(k)=\mathbb{E}[d_j(S_j^k, \hat{S}_j^k)]$ for $j=1, 2$.

\begin{definition}
A distortion pair $(D_1, D_2)$ is said to be achievable at rate $R$ over a DM-TWC if there exists a sequence of $(n, k)$ JSCCs (where $n$ is a function of $k$) such that for $j=1, 2$, $\lim_{k\to\infty} k/n = R$ and $\limsup_{k\to\infty}D_j(k)\le D_j$. 
The achievable distortion region of a rate $R$ two-way lossy transmission system is defined as the convex closure of the set of all achievable distortion pairs. 
\end{definition}

We will need the following known definitions. Here, $j=1, 2$ with $j\neq j'$.
\begin{itemize}[leftmargin=*]
\item Standard RD function \cite[Sec. 3.6]{kim2011}: 
\[
R^{(j)}(D_j)=\min\limits_{P_{\hat{S}_j|S_j}:\mathbb{E}[d_j(S_j, \hat{S}_j)]\le D_j} I(S_j; \hat{S}_j).
\]
\smallskip
\item Wyner-Ziv RD function \cite{wyner1976}: Letting $W_j\in\mathcal{W}_j$ with $|\mathcal{W}_j|\le |\mathcal{S}_j|+1$ denote an auxiliary random variable that satisfies the Markov chain $W_j\markov S_j\markov S_{j'}$, we have
\[
R^{(j)}_{\text{WZ}}(D_j)=\min\limits_{P_{W_j|S_j}}\min_{\substack{g: \mathcal{S}_{j'}\times\mathcal{W}_{j}\rightarrow\hat{\mathcal{S}}_j\\ \mathbb{E}[d_j(S_j, g(S_{j'}, W_j)]\le D_j}} I(S_j; W_j|S_{j'}).
\]\smallskip
\item  Conditional RD function \cite{gray1972}: 
\[
R_{S_j|S_{j'}}(D_j)=\min_{\substack{P_{\hat{S}_j|S_1, S_2}\\ \mathbb{E}[d_j(S_j, \hat{S}_j)]\le D_j}} I(S_j; \hat{S}_j|S_{j'}).
\] 
\end{itemize}

\section{Hybrid Digital/Analog Coding for Correlated Sources over DM-TWCs}
Inspired by the coding method in \cite{kim2015}, here we propose a rate-one (i.e., $n=k$) two-way joint source-channel hybrid coding scheme as depicted in Fig.~\ref{fig:Hybridblcok}. 
We first map the source block $S_j^n$ to a digital codeword $U_j^n(M_j)$ with index $M_j$ for $j=1, 2$. 
The channel input $X_j^n$ is then generated via a symbol-by-symbol map $f_j$ which combines the digital information $U_j^n(M_1)$ and the raw (or analog) information $S_j^n$.\footnote{In this scheme, the channel inputs do not make use of past received channel outputs (i.e., no interactive adaptive coding is used).}
Upon receiving $Y_j^n$, the channel decoder~$j$ estimates the codeword index $M_{j'}$ based on all available information, where $j, j'=1, 2$ with $j'\neq j$. 
The decoded codeword $U_{j'}(\hat{M}_{j'})$ and source $S_j^n$ are together passed through a symbol-by-symbol map $g_j$ to produce source reconstruction $\hat{S}_{j'}^n$. 
This hybrid coding scheme allows us to make use of the source correlation in generating the channel inputs which facilitates user cooperation and hence improves coding performance. 
Also, prior achievability results can be realized by the proposed hybrid coding system. 

\begin{figure}[!t]
\centering
\includegraphics[draft=false, scale=0.47]{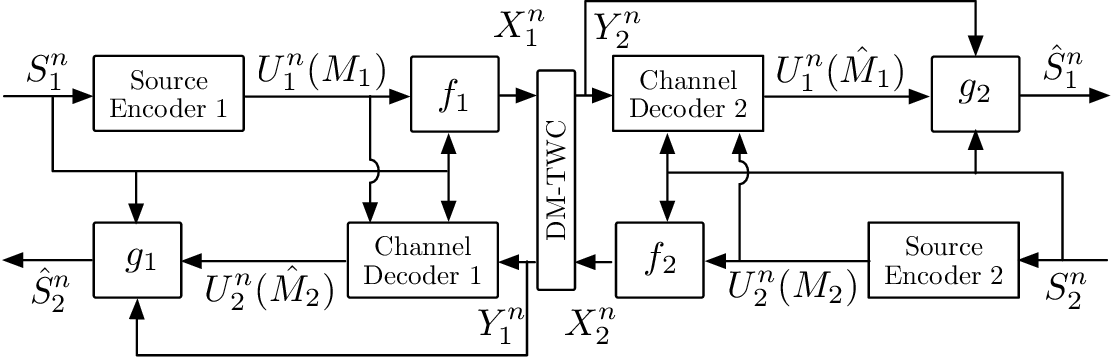}
\caption{Rate-one hybrid coding scheme for the transmission of correlated sources over DM-TWCs.}
\label{fig:Hybridblcok}
\end{figure}

\subsection{Rate-One Hybrid Coding Scheme}
\begin{theorem}
A distortion pair $(D_1, D_2)$ is in the achievable distortion region for the rate-one transmission of the source pair $(S_1, S_2)$ over a DM-TWC with transition probability $P_{Y_1, Y_2|X_1, X_2}$ if
\begin{subequations}
\label{eq:hybrid}
\begin{IEEEeqnarray}{rCl}
I(S_1; U_1|S_2, U_2) < I(U_1; Y_2|S_2, U_2)\label{eq:hybrida}\\
I(S_2; U_2|S_1, U_1) < I(U_2; Y_1|S_1, U_1)\label{eq:hybridb}
\end{IEEEeqnarray}
\end{subequations}
{for some joint distribution on the random variables $S_1$, $S_2$, $U_1$, $U_2$, $X_1$, $X_2$, $Y_1$, and $Y_2$ given by $P_{S_1, S_2}\allowbreak P_{U_1|S_1}\allowbreak P_{U_2|S_2}\allowbreak P_{X_1|U_1, S_1}\allowbreak P_{X_2|U_2, S_2}\allowbreak P_{Y_1, Y_2|X_1, X_2}$, two encoding functions $X_j=f_j(U_j, S_j)$, and two decoding functions $\hat{S}_{j'}=g_j(U_{j'}, U_j, S_j, Y_j)$, where $j, j'=1, 2$ with $j\neq j'$, such that $\mathbb{E}[d_j(S_j, \hat{S}_j)]\le D_j$.}
\end{theorem}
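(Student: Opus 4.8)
The plan is to prove achievability by a hybrid digital/analog coding argument that merges source \emph{covering} with channel \emph{packing} on a common per-user codebook, extending the one-way construction of \cite{kim2015}. Fix a joint distribution and maps $f_j,g_j$ as in the statement and set $n=k$. For $j=1,2$, independently generate $2^{nR_j}$ codewords $U_j^n(m_j)$, $m_j\in\{1,\dots,2^{nR_j}\}$, each i.i.d.\ according to the marginal $P_{U_j}$. In the encoder, user $j$ looks for an index $M_j$ with $(S_j^n,U_j^n(M_j))$ jointly $\epsilon$-typical and transmits $X_{j,i}=f_j(U_{j,i}(M_j),S_{j,i})$; by the covering lemma such an $M_j$ exists with high probability as long as $R_j>I(S_j;U_j)$. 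At the decoder, user $j$ knows $S_j^n$, its own codeword $U_j^n(M_j)$, and $Y_j^n$, and declares $\hat M_{j'}$ to be the unique index for which $(U_{j'}^n(\hat M_{j'}),U_j^n(M_j),S_j^n,Y_j^n)$ is jointly typical.

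For the analysis I would first show that, conditioned on successful covering at both users, the entire tuple $(S_1^n,S_2^n,U_1^n(M_1),U_2^n(M_2),X_1^n,X_2^n,Y_1^n,Y_2^n)$ is jointly typical with high probability under the prescribed distribution. This uses the Markov relation $U_j\markov S_j\markov(S_{j'},U_{j'})$ via the Markov lemma, so that each covering codeword remains jointly typical with the other user's variables, together with the conditional-typicality lemma applied through the memoryless maps $f_j$ and the memoryless channel. On this event the correct index passes the decoding test. For the competing indices, each codeword $U_{j'}^n(\tilde m_{j'})$ with $\tilde m_{j'}\neq M_{j'}$ is generated independently of $(U_j^n(M_j),S_j^n,Y_j^n)$, so the packing lemma and a union bound make the probability of a spurious jointly typical index vanish provided $R_{j'}<I(U_{j'};U_j,S_j,Y_j)$.

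Combining the two constraints, an admissible $R_{j'}$ exists precisely when $I(S_{j'};U_{j'})<I(U_{j'};U_j,S_j,Y_j)$. Expanding both sides with the chain rule and using $U_{j'}\markov S_{j'}\markov(S_j,U_j)$, which gives $I(U_{j'};S_{j'})=I(U_{j'};S_j,U_j)+I(U_{j'};S_{j'}\mid S_j,U_j)$ and $I(U_{j'};U_j,S_j,Y_j)=I(U_{j'};S_j,U_j)+I(U_{j'};Y_j\mid S_j,U_j)$, the common term cancels and the condition reduces exactly to $I(S_{j'};U_{j'}\mid S_j,U_j)<I(U_{j'};Y_j\mid S_j,U_j)$, i.e.\ to \eqref{eq:hybrida}--\eqref{eq:hybridb} for $(j,j')=(2,1)$ and $(1,2)$. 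On the joint-typicality event with $\hat M_{j'}=M_{j'}$, the symbol-by-symbol reconstruction $\hat S_{j',i}=g_j(U_{j',i}(M_{j'}),U_{j,i}(M_j),S_{j,i},Y_{j,i})$ keeps $(S_{j'}^n,\hat S_{j'}^n)$ jointly typical, so by the typical-average lemma $d_{j'}(S_{j'}^n,\hat S_{j'}^n)\le \mathbb{E}[d_{j'}(S_{j'},\hat S_{j'})]+\epsilon$. Since the alphabets are finite the distortion is bounded, so averaging over the vanishing-probability error events and letting $n\to\infty$ and then $\epsilon\to 0$ yields $\limsup_k D_{j'}(k)\le D_{j'}$, as required.

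The step I expect to be the main obstacle is the joint-typicality bookkeeping in the second paragraph: because the two codewords are chosen by \emph{distributed} covering (user $j$ sees only $S_j^n$) and because $Y_j^n$ is produced by the true codeword $U_{j'}^n(M_{j'})$ rather than a generic one, care is needed to (i) invoke the Markov lemma correctly so that the covering-induced pair $(S_j^n,U_j^n(M_j))$ extends to a jointly typical tuple with $(S_{j'}^n,U_{j'}^n(M_{j'}))$ under $P_{S_1,S_2}P_{U_1|S_1}P_{U_2|S_2}$, and (ii) justify that the competing codewords are independent of the decoder's observation $(U_j^n(M_j),S_j^n,Y_j^n)$ despite the latter's dependence on the chosen index $M_{j'}$. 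These are handled by the standard conditioning and symmetrization arguments for hybrid coding, but they are where the genuine work lies.
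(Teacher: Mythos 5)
Your proposal is correct and follows essentially the same route as the paper's own proof: identical random codebook generation, joint-typicality covering at the encoders with $R_j>I(S_j;U_j)$, symbol-by-symbol maps, unique joint-typicality decoding with the packing constraint $R_{j'}<I(U_{j'};U_j,S_j,Y_j)$, and the same Markov-chain reduction of the combined rate constraints to \eqref{eq:hybrida}--\eqref{eq:hybridb}. The subtleties you flag (the Markov-lemma step and the dependence of competing codewords on the chosen index) are exactly the details the paper also leaves to the hybrid-coding machinery of \cite{kim2015}, so your sketch is, if anything, slightly more explicit about where the technical work lies.
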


\begin{proof}[Sketch of Proof]
The proof is an extension of the one-way hybrid coding scheme of \cite{kim2015}. 
The detailed analysis is omitted here due to space limitations. 

Let $\mathcal{T}_{\epsilon}^{(n)}$ denote the set of jointly typical sequences with parameters $n$ and $\epsilon$ as defined in \cite{kim2011}; the domain of the sequences in $\mathcal{T}_{\epsilon}^{(n)}$ should be clear from the context and hence is omitted for the sake of brevity. For $j=1, 2$, we define $2^{\mli{nR}_j}$ as the size of user~$j$'s codebook.

\textit{\underline{Codebook Generation}:} Let $\epsilon>\epsilon^{'}>0$. For $j, j'=1, 2$ with $j\neq j'$, we fix $P_{U_j|S_j}$, encoding function $f_j(u_j, s_j)$, and decoding function $g_{j'}(u_j, u_{j'}, s_{j'}, y_{j'})$ such that $\mathbb{E}[d_j(S_j, \hat{S}_j)]\le D_j$. 
For $m_j=1, 2, \cdots, 2^{\mli{nR}_j}$, length-$n$ codewords $u_j^n(m_j)$'s are randomly drawn according to the distribution $\prod_{i=1}^n P_{U_j}(u_{j, i})$.
The codebooks $C_1$ and $C_2$, where $C_j=\{u_j^n(m_j): m_j=1, 2, \dots, 2^{\mli{nR}_j}\}$, are revealed to both users.

\textit{\underline{Encoding}:} 
For $j=1, 2$, upon observing $s_j^n$, the encoder of user~$j$ finds an index $m_j$ such that $s_j^n$ and $u_j^n(m_j)$ are jointly typical, i.e., $(s_j^n, u_j^n(m_j))\in\mathcal{T}_{\epsilon^{'}}^{(n)}$. If there is more than one such index, the encoder chooses one of them at random. 
If there is no such index, it chooses an index at random from $\{1, 2, \dots, 2^{\mli{nR}_j}\}$. The transmitter of user~$j$ then sends $x_{j, i}=f_{j, i}(u_{j, i}(m_j), s_{j, i})$, $i=1, 2, \dots, n$, over the DM-TWC. 

\textit{\underline{Decoding}:} 
For $j, j'=1, 2$ with $j\neq j'$, upon receiving $y_j^n$, the decoder of user~$j$ looks for a unique index $\hat{m}_{j'}$ such that $(u_{j'}^n(\hat{m}_{j'}), u_j^n(m_j), s_j^n, y_j^n)\in\mathcal{T}_{\epsilon}^{(n)}$. 
If there is more than one choice, the decoder chooses one of them at random. 
If there is no such unique choice, it chooses one at random from $\{1, 2, \dots, 2^{\mli{nR_j}}\}$. 
The reconstruction of $s_{j'}^n$ is then produced symbol-by-symbol via $\hat{s}_{j', i}=g_{j}(u_{j', i}(\hat{m}_{j'}), u_{j, i}(m_j), s_{j, i},  y_{j, i})$ for $i=1, 2, \dots, n$.   

\textit{\underline{Analysis of Expected Distortion}:} 
Let $M_j$ and $\hat{M}_j$ denote the (random) index chosen by encoder~$j$ and the corresponding decoded index, respectively. 
Define the error event $\mathcal{E}\triangleq \mathcal{E}_1 \cup \mathcal{E}_2$, where
$\mathcal{E}_1=\{(S_1^n, S_2^n, U_1^n(\hat{M}_1), U_2^n(M_2), Y_2^n)\notin\mathcal{T}_{\epsilon}^{(n)}\}$
and
$
\mathcal{E}_2=\{(S_1^n, S_2^n, U_1^n(M_1), U_2^n(\hat{M}_2), Y_1^n)\notin\mathcal{T}_{\epsilon}^{(n)}\}.
$
The expected distortion for users~$j$'s source reconstruction can be bounded as
\begin{IEEEeqnarray}{l}
\mathbb{E}[d_j(S_j^n, \hat{S}_j^n)]\nonumber\\
\ \ =\Pr(\mathcal{E})\cdot\mathbb{E}[d_j(S_j^n, \hat{S}_j^n)|\mathcal{E}]+\Pr(\bar{\mathcal{E}})\cdot\mathbb{E}[d_j(S_j^n, \hat{S}_j^n)|\bar{\mathcal{E}}]\nonumber\\
\ \ \le \Pr(\mathcal{E})\cdot d_{\max, j}+\Pr(\bar{\mathcal{E}})\cdot\mathbb{E}[d_j(S_j^n, \hat{S}_j^n)|\bar{\mathcal{E}}], \nonumber
\end{IEEEeqnarray}
where $\bar{\mathcal{E}}$ denotes the complement of $\mathcal{E}$, $d_{\max, j}\triangleq\max_{s\in\mathcal{S}_j, \hat{s}\in\hat{\mathcal{S}}_j}d_j(s, \hat{s})$, and $\mathbb{E}[d_j(S_j^n, \hat{S}_j^n)|\bar{\mathcal{E}}]$ is upper bounded by $(1+\epsilon)\cdot \mathbb{E}[d_j(S_j, \hat{S}_j)]$ due to the typical average lemma \cite[Sec. 2.4]{kim2011}.
Together with the assumption that $\mathbb{E}[d_j(S_j, \hat{S}_j)]\le D_j$, we obtain that $\mathbb{E}[d_j(S_j^n, \hat{S}_j^n)]\le \Pr(\mathcal{E})\cdot d_{\max, j}+\Pr(\bar{\mathcal{E}})\cdot (1+\epsilon)D_j$. 
It then suffices to show that $\Pr(\mathcal{E}_1)\to 0$ and $\Pr(\mathcal{E}_2)\to 0$ as $n\to\infty$ since $\Pr(\mathcal{E})\le \Pr(\mathcal{E}_1)+\Pr(\mathcal{E}_2)$.
We first analyze $\Pr(\mathcal{E}_1)$ by considering the following four events: 
\begin{IEEEeqnarray}{rCl}
\mathcal{F}_{1}&=&\{(S_1^n, U_1^n(m_1))\notin\mathcal{T}_{\epsilon^{'}}^{(n)}\ \text{for all}\ m_1\},\nonumber\\
\mathcal{F}_{2}&=&\{(S_2^n, U_2^n(m_2))\notin\mathcal{T}_{\epsilon^{'}}^{(n)}\ \text{for all}\ m_2\},\nonumber\\
\mathcal{F}_{3}&=&\{(S_1^n, S_2^n, U_1^n(M_1), U_2^n(M_2), Y_2^n)\notin\mathcal{T}_{\epsilon}^{(n)}\},\nonumber\\
\mathcal{F}_{4}&=&\{\exists\ m_1\neq M_1\ \text{s.t.}\ (U_1^n(m_1), U_2^n(M_2), S_2^n, Y_2^n)\in\mathcal{T}_{\epsilon}^{(n)}\}.\nonumber
\end{IEEEeqnarray}
Here, since the joint typicality in $\overline{\mathcal{F}_3}$ implies that $(U_1^n(M_1), U_2^n(M_2), S_2^n, Y_2^n)\in\mathcal{T}_{\epsilon}^{(n)}$, the event $\overline{\mathcal{F}_3}\cap\overline{\mathcal{F}_4}$ implies that $\hat{M}_1=M_1$. In other words, we obtain that $(S_1^n, S_2^n, U_1^n(\hat{M}_1), U_2^n(M_2), Y_2^n)\in\mathcal{T}_{\epsilon}^{(n)}$ under $\overline{\mathcal{F}_3}\cap\overline{\mathcal{F}_4}$.  
This shows that $\overline{\mathcal{F}_3}\cap\overline{\mathcal{F}_4}\subseteq\overline{\mathcal{E}_1}$ and hence $\mathcal{E}_1\subseteq\mathcal{F}_3\cup\mathcal{F}_4$, from which we obtain $\mathcal{E}_1\subset \mathcal{F}_1\cup \mathcal{F}_2\cup (\overline{\mathcal{F}_1\cup \mathcal{F}_2}\cap\mathcal{F}_3)\cup\mathcal{F}_4$.

Then by the union bound, we have 
\begin{IEEEeqnarray}{l}
\Pr(\mathcal{E}_1)\nonumber\\
\ \ \le \Pr(\mathcal{F}_1)+\Pr(\mathcal{F}_2)+\Pr(\overline{\mathcal{F}_1\cup \mathcal{F}_2}\cap\mathcal{F}_3)+\Pr(\mathcal{E}_4). \label{eq:ub} 
\IEEEeqnarraynumspace
\end{IEEEeqnarray}
The main part of the proof is to show that under the conditions that $R_1> I(S_1; U_1)$, $R_2> I(S_2; U_2)$, and $R_1<I(U_1; U_2, S_2, Y_2)$, each term on the right-hand-side of \eqref{eq:ub} tends to zero as $n\to\infty$, implying that $\Pr(\mathcal{E}_1)\rightarrow 0 $ as $n\to\infty$.
Combining the two inequalities $R_1> I(S_1; U_1)$ and $R_1<I(U_1; U_2, S_2, Y_2)$ then results in \eqref{eq:hybrida}. 
Due to symmetry, the same argument can be applied to $\Pr(\mathcal{E}_2)$ to obtain \eqref{eq:hybridb}.  
Thus, if \eqref{eq:hybrid} is satisfied, then $((1+\epsilon)D_1, (1+\epsilon)D_2)$ is achievable for all $\epsilon>0$, so $(D_1, D_2)$ is in the achievable distortion region.
\end{proof}

\subsection{Special Cases}
The proposed joint source-channel coding scheme includes as special cases uncoded transmission, rate-one separate source-channel coding, and correlation-preserving coding for (almost) lossless transmission.

\begin{itemize}
\item[(i)] \textbf{Uncoded transmission scheme}: Setting $U_j=\text{constant}$, $f_j(u_{j, i}, s_{j, i})=s_{j, i}$, and $g_{j'}(u_{j, i}, u_{j', i}, s_{j', i}, y_{j', i})=\tilde{g}_{j'}(s_{j', i}, y_{j', i})=\hat{s}_{j, i}$ for some decoding function $\tilde{g}_{j'}$ (e.g., maximum-a-posteriori (MAP) or minimum mean-square-error (MMSE) decoder) such that $\mathbb{E}[d_{j}(S_{j}, \hat{S}_{j})]\le D_{j}$, we obtain the uncoded scheme via the hybrid coding architecture. Note that, under this setup, the inequalities in (1) and (2) trivially hold since the left-hand-sides of both inequalities are zero.\smallskip

\item[(ii)] \textbf{SSCC for the lossy transmission of independent sources}: 
Define $\mathcal{V}_j\triangleq\mathcal{X}_j$ and $\mathcal{W}_j\triangleq\hat{\mathcal{S}}_j$, and let the joint distribution of $(S_1, S_2, U_1, U_2, X_1, X_2)$ be given by
\begin{IEEEeqnarray}{l}
P_{S_1, S_2, U_1, U_2, X_1, X_2}=P_{S_1}\cdot P_{S_2}\nonumber\\
\ \ \ \ \cdot\underbrace{P_{V_1}\cdot P_{W_1|S_1}}_{\triangleq P_{U_1|S_1}}\cdot\underbrace{P_{V_2}\cdot P_{W_2|S_2}}_{\triangleq P_{U_2|S_2}}\cdot\underbrace{P_{X_1|V_1}}_{\triangleq P_{X_1|S_1, U_1}}\cdot\underbrace{P_{X_2|V_2},}_{\triangleq P_{X_2|S_2, U_2}}\nonumber\IEEEeqnarraynumspace 
\end{IEEEeqnarray}
where $U_j\triangleq(V_j, W_j)$, $P_{V_j}$ is arbitrary, $P_{W_j|S_j}$ achieves the standard RD function at distortion level $D_j$ for $S_j$, and $P_{X_j|V_j}(x_j|v_j)=\mathbf{1}(x_j=v_j)$ (here $\mathbf{1}(\cdot)$ denotes the indicator function). 
In this setup, the encoding function is $x_{j, i}=f_j(u_{j, i}, s_{j, i})=v_{j, i}$ since $X_j=V_j$. 
Furthermore, the decoding function $g_{j'}$ is chosen as $\hat{s}_{j, i}=g_{j'}(u_{j, i}, u_{j', i}, s_{j', i}, y_{j', i})=w_{j, i}$, i.e., $\hat{S}_{j}=W_{j}$, thus satisfying $\mathbb{E}[d_{j}(S_{j}, \hat{S}_{j})]\le D_{j}$.  
The sufficient conditions in \eqref{eq:hybrid} become $R^{(1)}(D_1)=I(S_1; \hat{S}_1)<I(X_1; Y_2|X_2)$ and $R^{(2)}(D_2)=I(S_2; \hat{S}_2)<I(X_2; Y_1|X_1)$.
An achievability result for independent sources based on separate source-channel coding is thus obtained. \smallskip

\item[(iii)] \textbf{SSCC for the lossy transmission of correlated sources} \cite{jjw2017}: 
Define $\mathcal{V}_j\triangleq\mathcal{X}_j$, and let the joint distribution of $(S_1, S_2, U_1, U_2, X_1, X_2)$ be given by
\begin{IEEEeqnarray}{l}
P_{S_1, S_2, U_1, U_2, X_1, X_2}=P_{S_1, S_2}\nonumber\\
\ \ \ \ \cdot\underbrace{P_{V_1}\cdot P_{W_1|S_1}}_{\triangleq P_{U_1|S_1}}\cdot\underbrace{P_{V_2}\cdot P_{W_2|S_2}}_{\triangleq P_{U_2|S_2}}\cdot\underbrace{P_{X_1|V_1}}_{\triangleq P_{X_1|S_1, U_1}}\cdot\underbrace{P_{X_2|V_2},}_{\triangleq P_{X_2|S_2, U_2}} \nonumber\IEEEeqnarraynumspace
\end{IEEEeqnarray}
where $U_j\triangleq (V_j, W_j)$, $P_{V_j}$ is arbitrary, $P_{W_j|S_j}$ achieves the Wyner-Ziv RD function at distortion level $D_j$ and has associated decoding function $\tilde{g}_{j'}(w_j, s_{j'})$ for $S_j$, and $P_{X_j|V_j}(x_j|v_j)=\mathbf{1}(x_j=v_j)$. 
We also set $x_{j, i}=f_j(u_{j, i}, s_{j, i})=v_{j, i}$ since $X_j=V_j$. 
The decoding function is set as $\hat{s}_{j, i}=g_{j'}(u_{j, i},  u_{j', i}, s_{j', i}, y_{j', i})=\tilde{g}_{j'}(w_{j, i}, s_{j', i})$, where $\tilde{g}_{j'}(\cdot, \cdot)$ is the decoding function that achieves the Wyner-Ziv RD function.  
Note that the Markov chain relationship $W_{j'}\markov S_{j'}\markov S_j$ holds.
Then, the sufficient conditions in \eqref{eq:hybrid} become $R_{\text{WZ}}^{(1)}(D_1)<I(X_1; Y_2|X_2)$ and $R_{\text{WZ}}^{(2)}(D_2)<I(X_2; Y_1|X_1)$. 
The achievability result of \cite{jjw2017} for correlated sources based on separate source-channel coding is hence recovered.\smallskip

\item[(iv)] \textbf{Correlation-preserving coding scheme for (almost) lossless transmission of correlated sources \cite{gunduz2009}}: 
Suppose that $\mathcal{S}_j=\hat{\mathcal{S}}_j$ and consider the Hamming distortion measure \cite[Sec. 3.6]{kim2011}. 
Let $\mathcal{W}_j\triangleq\hat{\mathcal{S}}_j$ and $\mathcal{V}_j\triangleq\mathcal{X}_j$. 
Set the joint distribution of $(S_1, S_2, U_1, U_2, X_1, X_2)$ as
\begin{IEEEeqnarray}{l}
P_{S_1, S_2, U_1, U_2, X_1, X_2}=P_{S_1, S_2}\nonumber\\
\ \underbrace{P_{V_1|S_1}\cdot P_{W_1|S_1}}_{\triangleq P_{U_1|S_1}}\cdot\underbrace{P_{V_2|S_2}\cdot P_{W_2|S_2}}_{\triangleq P_{U_2|S_2}}\cdot\underbrace{P_{X_1|V_1}}_{\triangleq P_{X_1|S_1, U_1}}\cdot\underbrace{P_{X_2|V_2},}_{\triangleq P_{X_2|S_2, U_2}}\nonumber\IEEEeqnarraynumspace
\end{IEEEeqnarray}
where {  $U_j\triangleq(V_j, W_j)$, $P_{V_j|S_j}$ is arbitrary, $P_{W_j|S_j}(w_j|s_j)=\mathbf{1}(s_j=w_j)$, and $P_{X_j|V_j}(x_j|v_j)=\mathbf{1}(x_j=v_j)$. 
Clearly, we have $x_{j, i}=f_j(u_{j, i}, s_{j, i})=v_{j, i}$ since $X_j=V_j$. 
Furthermore, the decoding function is chosen as $\hat{s}_{j, i}=g_{j'}(u_{j, i}, u_{j', i}, s_{j', i}, y_{j', i})=w_{j, i}$ and hence we have $\hat{S}_{j}=W_{j}=S_{j}$. 
In this setup, $\Pr\big(\mathcal{E}\big)\to 0$ (as $n\to\infty)$ implies that $\Pr\big((S_j^n, W_j^n)\in\mathcal{T}_{\epsilon}^{(n)}\big)\to 1$ for $j=1, 2$. 
Together with the fact that $W_j=\hat{S}_j$ and the choice of $P_{W_j|S_j}$, we obtain $\Pr\big(\{S_1^n\neq\hat{S}_1^n\}\cup\{S_2^n\neq\hat{S}_2^n\} \big)\to 0$ as $n\to\infty$.  
The sufficient conditions in \eqref{eq:hybrid} become $H(S_1|S_2)<I(X_1; Y_2|X_2, S_2)$ and $H(S_2|S_1)<I(X_2; Y_1|X_1, S_1)$. 
The achievability result in \cite[Cor.~8.1]{gunduz2009} (without coded time-sharing) is hence recovered.}
\end{itemize}

{The next simple example shows that the proposed scheme strictly generalizes the above coding methods for correlated sources.}
\begin{example}
Assuming binary alphabet for all variables, let sources $S_1$ and $S_2$ have joint distribution $P_{S_1, S_2}(0, 0)=P_{S_1, S_2}(0, 1)=P_{S_1, S_2}(1, 1)=1/3$ and $P_{S_1, S_2}(1, 0)=0$ and be sent over the two-way channel described by $Y_1=X_1\oplus X_2\oplus Z$ and $Y_2=X_1\cdot X_2$, where $\oplus$ denotes binary addition and $Z$ is binary noise with $P_{Z}(1)=0.05$ that is independent of $S_j$'s and $X_j$'s. 
In the proposed hybrid coding scheme, we use uncoded transmission for the direction from user~1 to~2 and standard SSCC for the reverse direction. 
This configuration can be shown to achieve distortion pair $(D_1, D_2)=(0, 0)$ under the Hamming distortion measure. 
However, no SSCC scheme can attain this distortion pair since $H(S_1|S_2)<I(X_1; Y_2|X_2)$ and $H(S_2|S_1)<I(X_2; Y_1|X_1)$ cannot hold simultaneously; similarly the correlation-preserving coding scheme cannot achieve (almost) lossless reconstruction. 
Moreover, using uncoded transmission in both directions yields the distortion pair $(D_1, D_2)=(0, 0.033)$. 
\end{example}

\begin{remark}
The achievable distortion region of the proposed rate-one hybrid coding scheme may be further enlarged by taking into account a time-sharing parameter.
Also, Theorem~1 can be extended to the case of bandwidth mismatch, i.e., $n\neq k$, by enlarging the source and channel symbol alphabets (with size $|S_j|^k$ and $|X_j|^n$ for rate-$k/n$ transmission). 
\end{remark}

\section{A Converse Result and A Joint Source-Channel Coding Theorem}


\begin{theorem}
\label{chIIlem:gaouterbound}
If a rate-$k/n$ JSSC achieves the distortion levels $D_1$ and $D_2$ for the lossy transmission of correlated sources $S_1$ and $S_2$ over a DM-TWC with transition probability $P_{Y_1, Y_2|X_1, X_2}$, then we have
\begin{subequations}
\begin{IEEEeqnarray}{rCl}
k\cdot R_{S_1|S_2}(D_1)\le  n\cdot I(X_1; Y_2|X_2),\label{eqa}\\ 
k\cdot R_{S_2|S_1}(D_2)\le  n\cdot I(X_2; Y_1|X_1),\label{eqb}
\end{IEEEeqnarray}
\end{subequations}
for some joint distribution on the random variables $S_1$, $S_2$, $X_1$, $X_2$, $Y_1$, and $Y_2$ given by $P_{S_1, S_2}P_{X_1, X_2}P_{Y_1, Y_2|X_1, X_2}$. 

\end{theorem}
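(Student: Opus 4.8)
The plan is to prove \eqref{eqa} by a standard converse chain; \eqref{eqb} then follows by interchanging the roles of the two users. The left end of the chain is the conditional rate--distortion converse: since user~2 forms $\hat{S}_1^k = g_2(S_2^k, Y_2^n)$ with $\mathbb{E}[d_1(S_1^k,\hat{S}_1^k)]\le D_1$, I would first establish
\[
k\cdot R_{S_1|S_2}(D_1) \le I(S_1^k; \hat{S}_1^k \mid S_2^k).
\]
This uses the memorylessness of $\{(S_{1,m},S_{2,m})\}$ to write $H(S_1^k\mid S_2^k)=\sum_m H(S_{1,m}\mid S_{2,m})$, the subadditivity $H(S_1^k\mid \hat{S}_1^k,S_2^k)\le \sum_m H(S_{1,m}\mid \hat{S}_{1,m},S_{2,m})$, the definition of $R_{S_1|S_2}(\cdot)$ applied per symbol to the induced conditional law $P_{\hat{S}_{1,m}\mid S_{1,m},S_{2,m}}$, and finally the convexity and monotonicity of the conditional RD function together with $k^{-1}\sum_m \mathbb{E}[d_1(S_{1,m},\hat{S}_{1,m})]\le D_1$. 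Because $\hat{S}_1^k$ is a deterministic function of $(S_2^k,Y_2^n)$, the data-processing inequality conditioned on $S_2^k$ then gives $I(S_1^k;\hat{S}_1^k\mid S_2^k)\le I(S_1^k;Y_2^n\mid S_2^k)$.

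Next I would single-letterize the channel term. Expanding by the chain rule, $I(S_1^k;Y_2^n\mid S_2^k)=\sum_{i=1}^n \big[H(Y_{2,i}\mid S_2^k,Y_2^{i-1})-H(Y_{2,i}\mid S_1^k,S_2^k,Y_2^{i-1})\big]$. For the first term I would use that the adaptive encoder makes $X_{2,i}=f_{2,i}(S_2^k,Y_2^{i-1})$ a deterministic function of the conditioning variables, so that term equals $H(Y_{2,i}\mid S_2^k,Y_2^{i-1},X_{2,i})\le H(Y_{2,i}\mid X_{2,i})$. For the second term I would adjoin $(X_{1,i},X_{2,i})$ to the conditioning (conditioning cannot increase entropy) and then invoke the memoryless property of the DM-TWC, under which $Y_{2,i}$ is conditionally independent of $(S_1^k,S_2^k,Y_2^{i-1})$ given $(X_{1,i},X_{2,i})$, to obtain $H(Y_{2,i}\mid S_1^k,S_2^k,Y_2^{i-1})\ge H(Y_{2,i}\mid X_{1,i},X_{2,i})$. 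Hence each summand is at most $I(X_{1,i};Y_{2,i}\mid X_{2,i})$, giving $I(S_1^k;Y_2^n\mid S_2^k)\le \sum_{i=1}^n I(X_{1,i};Y_{2,i}\mid X_{2,i})$. Introducing a time-sharing index $Q$ uniform on $\{1,\dots,n\}$ and setting $X_1=X_{1,Q}$, $X_2=X_{2,Q}$, $Y_2=Y_{2,Q}$, the sum equals $n\,I(X_1;Y_2\mid X_2,Q)$; since the channel is time-invariant and memoryless, $Y_2\perp Q\mid(X_1,X_2)$, so $I(Q;Y_2\mid X_1,X_2)=0$ and therefore $I(X_1;Y_2\mid X_2,Q)\le I(X_1;Y_2\mid X_2)$. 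Chaining the inequalities yields \eqref{eqa} for the time-averaged input law $P_{X_1,X_2}$, which together with the separately carried source law $P_{S_1,S_2}$ and the channel realizes the claimed factorization $P_{S_1,S_2}P_{X_1,X_2}P_{Y_1,Y_2|X_1,X_2}$.

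I expect the main obstacle to be the two-way, adaptive nature of the code: because $X_{2,i}$ depends on the past outputs $Y_2^{i-1}$, which are themselves functions of user~1's past inputs, one must argue carefully that (i) conditioning on $(S_2^k,Y_2^{i-1})$ already determines $X_{2,i}$ and (ii) the memoryless channel still gives conditional independence of $Y_{2,i}$ from the past given $(X_{1,i},X_{2,i})$ despite the feedback-like coupling. This is precisely where the genie-aided viewpoint helps, since supplying $S_2^k$ and the current input $X_{2,i}$ to the analysis decouples the two directions. A secondary point requiring care is the direction of the final $Q$-averaging step, which relies on $I(Q;Y_2\mid X_1,X_2)=0$ rather than on any (false) unconditional independence of $Q$ and the inputs, the induced $(X_1,X_2)$ being in general correlated.
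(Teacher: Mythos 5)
Your proof is correct and follows essentially the same converse chain as the paper: a conditional rate--distortion converse with side information $S_2^k$, data processing to $I(S_1^k;Y_2^n\mid S_2^k)$, single-letterization of the channel term using that $X_{2,i}$ is a deterministic function of $(S_2^k,Y_2^{i-1})$ together with channel memorylessness, and a final averaging step yielding $P_{X_1,X_2}=n^{-1}\sum_{i=1}^n P_{X_{1,i},X_{2,i}}$. The only cosmetic differences are that you perform the averaging via a time-sharing variable $Q$ where the paper invokes concavity of $I(X_1;Y_2\mid X_2)$ in $P_{X_1,X_2}$ (equivalent arguments), and you work with the original system directly whereas the paper phrases the identical chain for a genie-aided encoder at user~1 that also sees $S_2^k$.
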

\begin{proof}
Consider a genie-aided system for the transmission from user~1 to~2, where user~$1$ additionally has access to the decoder side-information $S_{2}^k$.
The encoder of the genie-aided system is of the form $f_{1, i}: \mathcal{S}_1^k\times \mathcal{S}_2^k\times\mathcal{Y}_1^{i-1}$ for $i=1, 2, \dots, n$.
Clearly, the original system in Definition~1 is a special case of the genie-aided system and thus the genie-aided system can be used to lower bound the distortion of the original system.

Assume that the genie-aided system has expected distortion $\mathbb{E}[d_1(S_1^k, \hat{S}_1^k)]\le D_1$.  
Then, we have
{\allowdisplaybreaks
\begin{IEEEeqnarray}{l}
k\cdot R_{S_1|S_2}(D_1)\nonumber\\
\ \ \le k\cdot R_{S_1|S_2}\left(k^{-1}\sum\limits_{m=1}^k \mathbb{E}\left[d_1(S_{1, m}, \hat{S}_{1, m})\right]\right)\label{cc-8}\\
\ \ \le \sum\limits_{m=1}^k R_{S_1|S_2}\left(\mathbb{E}[d_1(S_{1, m}, \hat{S}_{1, m})]\right)\label{cc-1}\\
\ \ \le \sum\limits_{m=1}^k I(S_{1, m}; \hat{S}_{1, m}|S_{2, m})\label{cc-2}\\
\ \ \le \sum\limits_{m=1}^k I(S_{1, m}; Y_2^n, S_2^k|S_{2, m})\label{cc-3}\\
\ \ = \sum\limits_{m=1}^k H(S_{1, m}|S_{2, m})-H(S_{1, m}|Y_2^n, S_{2}^k)\nonumber\\
\ \ \le \sum\limits_{m=1}^k H(S_{1, m}|S_2^k, S_1^{m-1})-H(S_{1, m}|Y_2^n, S_{2}^k, S_1^{m-1})\label{cc-4}\IEEEeqnarraynumspace\\
\ \ = I(S_1^k; Y_2^n|S_2^k)\nonumber\\
\ \ = \sum_{i=1}^n I(S_1^k; Y_{2, i}|S_2^k, Y_2^{i-1})\nonumber\\
\ \ \le \sum_{i=1}^n H(Y_{2, i}|X_{2, i})-H(Y_{2, i}|S_2^k, Y_2^{i-1}, S_1^k, X_{2, i})\label{cc-5}\\
\ \ \le \sum_{i=1}^n H(Y_{2, i}|X_{2, i})-H(Y_{2, i}|X_{1, i}, X_{2, i})\label{cc-7}\\
\ \ = \sum\limits_{i=1}^n I(X_{1, i}; Y_{2, i}|X_{2, i})\nonumber\\
\ \ \le n\cdot I(X_1; Y_2|X_2)\label{cc-6},
\end{IEEEeqnarray}
where \eqref{cc-8} follows since $R_{S_1|S_2}(D_1)$ is non-increasing and the assumption on the expected distortion, \eqref{cc-1} and \eqref{cc-2} are respectively due to convexity and the definition of conditional RD function, \eqref{cc-3} follows from the data-processing inequality, \eqref{cc-4} holds by the Markov chain relationship $S_{1, m}\markov S_{2, m}\markov (S_2^k, S_1^{m-1})$, \eqref{cc-5} holds since conditioning reduces entropy and $X_{2, i}$ is a function of $(Y_2^{i-1}, S_2^k)$, \eqref{cc-7} holds since the channel is memoryless, \eqref{cc-6} holds with $P_{X_1, X_2}=n^{-1}\sum_{i=1}^n P(X_{1, i}, X_{2, i})$  since $I(X_{1, i}; Y_{2, i}|X_{2, i})$ is concave in $P_{X_{1, i}, X_{2, i}}$.  
}

Due to symmetry, we obtain the condition in \eqref{eqb} by swapping the roles of users~1 and~2. 
Combining the bounds proves the theorem.  
\end{proof}


\begin{remark}
Consider $\hat{\mathcal{S}_j}=\mathcal{S}_j$. Under the Hamming distortion measure, Theorem~2 reduces to \cite[Proposition 8.2]{gunduz2009} for (almost) lossless transmission since $\Pr\big(S_j^k\neq\hat{S}_j^k\big)=0$ is a stricter requirement than $k^{-1}\sum_{m=1}^k\Pr\big(S_{j, m}\neq \hat{S}_{j, m}\big)=0$ and $R_{S_j|S_{j'}}(0)=H(S_j|S_{j'})$ for $j, j'=1, 2$ with $j\neq j'$. 
\end{remark}

Let $\tilde{R}_j$ denote the channel coding rate of user~$j$ for $j=1, 2$. 
In \cite{shannon1961}, Shannon derived capacity inner and outer bounds for DM-TWCs:
\begin{subequations}
\vspace{-0.3cm}
\begin{IEEEeqnarray}{rCl}
\tilde{R}_1&\le & I(X_1; Y_2|X_2),\nonumber\\
\tilde{R}_2&\le & I(X_2; Y_1|X_1),\nonumber
\end{IEEEeqnarray}
\end{subequations}
\vspace{-0.1cm}where $X_1$ and $X_2$ are independent inputs in the inner bound, while in the outer bound $X_1$ and $X_2$ are arbitrarily correlated.
Although the two capacity bounds in general do not coincide, the capacity region can be exactly determined for channels with symmetry properties \cite{han1984, cheng2014, song2016, chaaban2017, jjw2018}. 
In \cite{jjw2017}, a complete joint source-channel coding theorem was derived for independent sources and DM-TWCs for which interactive adaptive coding cannot enlarge the capacity region, i.e., Shannon's capacity inner bound is tight. 
Based on the achievability result in \cite[Lemma~1]{jjw2017} (special case (iii) of Section III-B) and Theorem~\ref{chIIlem:gaouterbound} above, we obtain another joint source-channel coding theorem as follows. 

\begin{theorem}[\textbf{A Joint Source-Channel Coding Theorem}]
\label{JSCC} 
For the rate-$k/n$ transmission of correlated source pair $(S_1, S_2)$ whose Wyner-Ziv and conditional RD functions are equal, i.e., $R^{(j)}_{\emph{WZ}}(D_j)=R_{S_j|S_{j'}}(D_j)$ for $j, j=1, 2$ with $j\neq j'$, over a DM-TWC with transition probability $P_{Y_1, Y_2|X_1, X_2}$ for which interactive adaptive coding cannot enlarge the capacity region, a distortion pair $(D_1, D_2)$ is achievable if and only if  
\begin{IEEEeqnarray}{rCl}
k\cdot R^{(1)}_{\emph{WZ}}(D_1)&\le & n \cdot I(X_1; Y_2|X_2),\nonumber\\
k\cdot R^{(2)}_{\emph{WZ}}(D_2)&\le & n \cdot I(X_2; Y_1|X_1),\nonumber
\end{IEEEeqnarray}
for some joint distribution on the random variables $S_1$, $S_2$, $X_1$, $X_2$, $Y_1$, and $Y_2$ given by $P_{S_1, S_2}P_{X_1}P_{X_2}P_{Y_1, Y_2|X_1, X_2}$. 
\label{JSC}
\end{theorem}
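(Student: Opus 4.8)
The plan is to prove the two directions separately, drawing achievability from the separate source-channel coding (SSCC) scheme of special case~(iii) in Section~III-B and the converse from the genie-aided bound of Theorem~\ref{chIIlem:gaouterbound}; the two hypotheses of the statement---equality of the Wyner--Ziv and conditional RD functions, and tightness of Shannon's capacity inner bound---are what let me reconcile the achievability and converse regions. It is worth noting at the outset that achievability requires neither hypothesis (the SSCC scheme is naturally phrased in terms of $R^{(j)}_{\text{WZ}}$ and uses independent inputs directly), whereas both hypotheses are consumed in the converse.

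For the ``if'' direction, I would invoke special case~(iii), whose sufficient conditions are precisely $R^{(j)}_{\text{WZ}}(D_j)<I(X_j;Y_{j'}|X_{j'})$ under the product input law $P_{X_1}P_{X_2}$ induced by the arbitrary independent marginals $P_{V_j}=P_{X_j}$. Extending this rate-one result to rate $k/n$ via the alphabet-enlargement argument of Remark~1 yields the strict inequalities $k\,R^{(j)}_{\text{WZ}}(D_j)<n\,I(X_j;Y_{j'}|X_{j'})$. To pass from these to the non-strict bounds in the statement I would use a standard limiting argument: whenever the non-strict bounds hold at $(D_1,D_2)$, the strict bounds hold at $(D_1+\delta,D_2+\delta)$ for every $\delta>0$ by continuity and monotonicity of $R^{(j)}_{\text{WZ}}$, so each such pair is achievable; closedness of the achievable distortion region (Definition~2) then delivers $(D_1,D_2)$ as $\delta\to 0$.

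For the ``only if'' direction, I would begin from Theorem~\ref{chIIlem:gaouterbound}: any achievable $(D_1,D_2)$ satisfies \eqref{eqa}--\eqref{eqb} for some law $P_{S_1,S_2}P_{X_1,X_2}P_{Y_1,Y_2|X_1,X_2}$. Substituting the hypothesis $R_{S_j|S_{j'}}(D_j)=R^{(j)}_{\text{WZ}}(D_j)$ converts these into $k\,R^{(1)}_{\text{WZ}}(D_1)\le n\,I(X_1;Y_2|X_2)$ and $k\,R^{(2)}_{\text{WZ}}(D_2)\le n\,I(X_2;Y_1|X_1)$, which place the normalized rate pair $\bigl(\tfrac{k}{n}R^{(1)}_{\text{WZ}}(D_1),\tfrac{k}{n}R^{(2)}_{\text{WZ}}(D_2)\bigr)$ in Shannon's capacity \emph{outer} bound region, attained here with possibly \emph{correlated} inputs. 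The statement, however, demands the \emph{independent} product form $P_{X_1}P_{X_2}$.

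Reconciling these input structures is the crux of the argument and the step I expect to be the main obstacle. For the channels under consideration---those for which interactive adaptive coding cannot enlarge the capacity region---Shannon's inner and outer bounds coincide, so the correlated-input outer region equals the independent-input inner region. Consequently the rate pair produced by the converse already lies in the independent-input region, and one may replace the correlated $P_{X_1,X_2}$ by a product $P_{X_1}P_{X_2}$ without violating either inequality, matching the achievability characterization and closing the ``iff.'' The entire converse hinges on this coincidence of the two Shannon bounds: for a general DM-TWC, where the inner and outer bounds may differ, the correlated-input converse of Theorem~\ref{chIIlem:gaouterbound} would not collapse to the required product form, and the equivalence would break down.
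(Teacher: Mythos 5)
Your proposal follows essentially the same route as the paper, whose proof of this theorem is a one-line combination of exactly the two ingredients you use: the SSCC achievability of special case (iii) (i.e., \cite[Lemma~1]{jjw2017}) for the ``if'' direction, and the genie-aided converse of Theorem~\ref{chIIlem:gaouterbound} for the ``only if'' direction, with the hypothesis $R^{(j)}_{\text{WZ}}(D_j)=R_{S_j|S_{j'}}(D_j)$ matching the distortion expressions and the no-adaptive-coding hypothesis collapsing the converse's correlated-input region onto the independent-input achievable region. Your added details---the rate-$k/n$ extension, the $\delta$-perturbation/closure argument handling strict versus non-strict inequalities, and the explicit accounting of where each hypothesis is consumed---are correct fillings-in of steps the paper leaves implicit.
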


We end this section with two examples that satisfy Theorem~3.
In the first example, the alphabets are not finite but one can employ the discretization procedures in \cite[Sec. 3.8 and Sec. 3.4]{kim2011} to obtain their quantized descriptions. Our results can be directly applied to this discretized Gaussian system.

\begin{example}
Transmitting correlated Gaussian sources over memoryless TWCs with additive white Gaussian noise (AWGN) \cite{han1984} under the squared-error distortion measure. 
\end{example}

\begin{example}
Transmitting binary correlated sources with $Z$-channel correlation \cite{nikos2014} over binary additive noise DM-TWCs \cite{song2016} under the Hamming distortion measure. 
\end{example}

\section{Conclusions}
We extended the hybrid coding scheme of \cite{kim2015} for one-way channels to a TWC setup such that channel state and decoder side-information are taken into accounts.
The proposed scheme not only includes prior coding techniques as special cases but also leverages source correlation against channel noise. 
When the channel capacity cannot be enlarged by adaptive coding, we further showed that the SSCC scheme given in \cite{jjw2017} is optimal for coding correlated sources for which the Wyner-Ziv and conditional rate-distortion functions coincide. 
Future work includes identifying scenarios where partial correlation-preserving schemes on channel inputs can improve source reconstruction, introducing interactive adaptive coding (in terms of previously received channel outputs) in the hybrid coding system, and establishing posterior-matching-like coding schemes \cite{shayevitz2011} for transmitting correlated sources over general DM-TWCs.   

\bibliographystyle{IEEEtran}
\bibliography{thesis}

\end{document}